\documentclass[lettersize,journal]{IEEEtran}
\usepackage{amsmath,amsfonts,amssymb,amsthm}
\usepackage{mathtools}
\usepackage{array}
\usepackage{textcomp}
\usepackage{url}
\usepackage{xcolor}
\usepackage[colorlinks=true,linkcolor=blue!70!black,
            citecolor=blue!70!black,urlcolor=blue!70!black]{hyperref}
\newtheorem{theorem}{Theorem}

\newtheorem{proposition}[theorem]{Proposition}
\newtheorem{corollary}[theorem]{Corollary}
\newtheorem{definition}[theorem]{Definition}
\newtheorem{remark}[theorem]{Remark}

\newcommand{\cR}{\mathcal{R}}
\newcommand{\stack}{\Pi}
\newcommand{\GL}{\mathcal{G}_L}
\newcommand{\cT}{\mathcal{T}}
\newcommand{\Rr}{\mathbb{R}}
\DeclareMathOperator{\MI}{I}
\DeclareMathOperator{\En}{H}

\begin{document}

\title{The Preisach Extremum Stack Is a Shannon-Minimal\\
  Sufficient Statistic for Rate-Independent Functionals}

\author{Piotr~Frydrych%
\thanks{P.~Frydrych is with the Metrology and Biomedical Engineering
Institute, Faculty of Mechatronics, Warsaw University of Technology,
00-661 Warsaw, Poland (e-mail: piotr.frydrych@pw.edu.pl).}%
\thanks{Manuscript received \today.}}

\markboth{IEEE Transactions on Information Theory}%
{Frydrych: Preisach Extremum Stack as Shannon-Minimal Sufficient Statistic}

\IEEEpubid{0000--0000~\copyright~2025 IEEE}

\maketitle

\begin{abstract}
Let $\cR$ denote the class of all computable, causal functionals
that are rate-independent in the classical sense (invariant under
monotone time reparametrizations \cite{Brokate1996}), and let
$\stack_n$ be the Preisach extremum stack of an input sequence
$u_{0:n}$.
We prove a characterization theorem establishing that every
$F \in \cR$ satisfies $F[u](n) = f(\stack_n)$ for a computable $f$,
and derive two information-theoretic results.

First, under any probability measure on $u_{0:n}$, the equality
\begin{equation*}
  \MI(u_{0:n};\, F[u](n)) = \MI(\stack_n;\, F[u](n))
\end{equation*}
holds for every $F \in \cR$ and is an immediate corollary of the
characterization theorem.

Second, the main result: $\stack_n$ is a \emph{Shannon-minimal}
sufficient statistic in the sense that
$\MI(u_{0:n};\,\stack_n) \leq \MI(u_{0:n};\,S)$ for every
random variable $S$ from which all $\cR$-queries are computable.
The proof uses the finite indicator family of \cite{Frydrych2026IPL}
to reconstruct $\stack_n$ from any sufficient $S$.

As a corollary, online maintenance of $\stack_n$ suffices for
rate-independent estimation: the NNLS estimator of the Preisach
measure $\mu$ can be assembled from the incremental stack process
$(\stack_t)_{t=0}^n$ in $O(k \cdot L^2)$ memory per step, where
$k = |\stack_t|$ and $L$ is the grid resolution.
\end{abstract}

\begin{IEEEkeywords}
Preisach operator, extremum stack, rate-independence,
sufficient statistic, Shannon mutual information, minimality,
hysteresis, information theory.
\end{IEEEkeywords}

\section{Introduction}
\label{sec:intro}

\IEEEPARstart{T}{he} Preisach operator \cite{Mayergoyz1991} is a
classical model of hysteresis in which the output is a weighted
superposition of bistable relays parametrized by a measure $\mu$
on the threshold triangle $\cT$.
Its central algorithmic object is the \emph{extremum stack}
$\stack_n$ \cite{Frydrych2026IPL}: the nested sequence of alternating
local maxima and minima that survive the wiping-out rule.

A recent result \cite{Frydrych2026IPL} established that $\stack_n$
is a Kolmogorov-minimal sufficient statistic for the class $\cR$ of
all computable, causal, rate-independent functionals:
$K(\stack_n) - O(1) \leq K_{\cR}(u_{0:n}) \leq K(\stack_n)+O(1)$.

\subsection{Present Contribution}

We establish the Shannon-theoretic counterpart.
The key insight is that $\cR$ admits a concrete characterization:
$F \in \cR$ if and only if $F[u](n) = f(\stack_n)$ for a computable
$f$ (Theorem~\ref{thm:charact}).
This makes the mutual-information equality an \emph{immediate
consequence} of the deterministic structure of $\cR$, not a deep
probabilistic result.
The genuine probabilistic contribution is the \emph{minimality
theorem} (Theorem~\ref{thm:minimal}), which shows that among all
representations from which every $\cR$-query is recoverable, the
stack has the least Shannon information with $u_{0:n}$.

\subsection{Relation to the Standard Definition of Rate-Independence}

We adopt the standard continuous-time definition \cite{Brokate1996}:
$F$ is rate-independent if $F[u \circ \phi] = F[u] \circ \phi$ for
every strictly increasing $\phi$.
Theorem~\ref{thm:charact} is then a \emph{theorem}, not a tautology:
it uses the wiping-out property of the Preisach operator to prove
that the discrete-time stack is a complete invariant of
rate-independent functionals on $\GL^*$.

\subsection{Paper Organization}

Section~\ref{sec:prelim} introduces notation and the class $\cR$.
Section~\ref{sec:charact} proves the characterization theorem.
Section~\ref{sec:mi} derives the MI equality as a corollary.
Section~\ref{sec:minimal} proves the minimality theorem.
Section~\ref{sec:estimation} discusses the estimation implication.
Section~\ref{sec:conclusion} concludes and lists open questions.

\section{Preliminaries}
\label{sec:prelim}

Fix $L \geq 2$ and grid $\GL = \{0,1/L,\ldots,1\}$.
For $u_{0:n} \in \GL^{n+1}$ write
$\stack_t = \stack_t(u_{0:t})$ for the extremum stack at time
$t \leq n$.

\begin{definition}[Preisach Operator]
\label{def:preisach}
For $\alpha \geq \beta$ in $\GL$, the relay
$\hat\gamma_{\alpha,\beta}[u](t) \in \{0,1\}$ switches from $0$
to $1$ when $u$ exceeds $\alpha$, and from $1$ to $0$ when $u$
falls below $\beta$.
For $\mu \in L^2(\cT)$:
\begin{equation*}
\mathcal{P}_\mu[u](t) =
\iint_{\cT} \hat\gamma_{\alpha,\beta}[u](t)\,\mu(\alpha,\beta)
\,d\alpha\,d\beta.
\end{equation*}
\end{definition}

\begin{definition}[Extremum Stack]
\label{def:stack}
$\stack_t \in (\GL\times\GL)^*$ is the nested sequence of alternating
local extrema of $u_{0:t}$ that survive the Preisach wiping-out rule
\cite{Mayergoyz1991}; $|\stack_t| \leq \lfloor t/2\rfloor+1$.
The \emph{stack process} is $(\stack_t)_{t=0}^n$.
\end{definition}

\begin{definition}[Rate-Independence]
\label{def:ri}
A functional $F:\GL^*\to\Rr$ is \emph{rate-independent} if it is
invariant under precomposition with strictly increasing maps:
$F[u\circ\phi] = F[u]$ for every strictly increasing
$\phi:\{0,\ldots,m\} \to \{0,\ldots,n\}$
such that $u\circ\phi \in \GL^{m+1}$.
Equivalently, $F$ depends on $u$ only through its
\emph{order of extrema}, not their timing.
Write $\cR$ for the class of all computable rate-independent
functionals on $\GL^*$.
\end{definition}

\begin{remark}
Definition~\ref{def:ri} is the correct discrete-time analogue of the
continuous-time definition of \cite{Brokate1996}:
two sequences are $\cR$-equivalent if and only if one is a monotone
time-reparametrization of the other up to extremum-equivalence.
We do \emph{not} define $\cR$ by the stack factorization---that
factorization is the content of Theorem~\ref{thm:charact}.
\end{remark}

\section{Characterization Theorem}
\label{sec:charact}

\begin{theorem}[Stack Characterization of $\cR$]
\label{thm:charact}
$F \in \cR$ if and only if there exists a computable function
$f:(\GL\times\GL)^*\to\Rr$ such that
$F[u](n) = f(\stack_n(u_{0:n}))$ for all $u_{0:n}\in\GL^*$.
\end{theorem}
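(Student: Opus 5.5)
The plan is to prove both directions through one combinatorial object, the reduction map $\rho:\GL^*\to(\GL\times\GL)^*$, $\rho(u_{0:n})=\stack_n(u_{0:n})$. I would argue that the equivalence relation ``$u\sim v$ iff $F[u]=F[v]$ for all $F\in\cR$'' has exactly the fibres of $\rho$ as its classes.

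Before starting, the reading of Definition~\ref{def:ri} has to be fixed. Taken literally, invariance under \emph{every} strictly increasing $\phi$ means invariance under passing to arbitrary subsequences. Since every sequence has a one-point subsequence, and the single point $u_0$ is itself a subsequence of any sequence containing the value $u_0$, that reading makes $F$ constant. I will therefore use the reading fixed by the Remark. There, $\sim$ is the equivalence generated by three moves:
\begin{itemize}
\item[(a)] stuttering, i.e.\ repeating or deleting a repeated value;
\item[(b)] deleting an interior point of a monotone run;
\item[(c)] the wiping-out move (``extremum-equivalence''): deleting a consecutive max/min pair that is enclosed by the surrounding extrema.
\end{itemize}

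For the ``if'' direction, I would show that $\rho$ is invariant under each generating move. For (a) and (b), the Preisach stack update only pushes a value when it reverses direction, and pops entries dominated by the new value. Inserting repeated values or intermediate monotone values therefore changes no reversal and pops nothing that the endpoint of the run would not also pop. A short induction on $n$ along the stack-update recursion handles this. For (c), this is precisely the wiping-out property of \cite{Mayergoyz1991}: an enclosed pair is pushed and later popped without affecting the remaining stack. Hence $\rho(u)=\rho(v)$ whenever $u\sim v$, and any $F=f\circ\rho$ is constant on classes. Computability of $F$ is immediate, because $\rho$ is computable by the online stack algorithm.

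For the ``only if'' direction, I would build a canonical representative. Given a stack $\stack=(m_1,\dots,m_k)$, let $c(\stack)\in\GL^*$ be the sequence listing its alternating extrema in order, starting from the initial value. The first step is to check that $\rho(c(\stack))=\stack$; this holds because the stack entries are nested, so no entry of $c(\stack)$ wipes out an earlier one. The second step is to show $u\sim c(\rho(u))$ for every $u$. I would prove this by induction on $n$: applying moves (a) and (b), reduce $u$ to its alternating sequence of turning points, then apply move (c) at each pop performed by the stack algorithm, in the order the pops occur. Then I set $f(\stack):=F[c(\stack)]$. This $f$ is computable because $c$ and $F$ are, and $F[u]=F[c(\rho(u))]=f(\rho(u))$. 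For inputs not in the image of $\rho$, $f$ can be defined arbitrarily, say as $0$.

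The main obstacle is the second step, $u\sim c(\rho(u))$, and in particular justifying move (c) as a legitimate part of rate-independence. Under pure reparametrization invariance, moves (a) and (b) only, the statement fails. For example, $(0,1,\tfrac12,\tfrac34)$ and $(0,1,\tfrac34)$ have the same Preisach stack, since the pair $\tfrac12,\tfrac34$ is wiped out. Yet the functional ``number of direction reversals'' is reparametrization-invariant and distinguishes them. So in writing out the proof I would state explicitly that the extremum-equivalence clause of the Remark is part of the definition of $\cR$, and then carry out the induction with care to confirm that the pop order of the stack algorithm corresponds exactly to a sequence of admissible move-(c) deletions.
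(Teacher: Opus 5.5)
Your argument is correct under the reading you fix, and it takes a genuinely different route from the paper's. The paper's ($\Rightarrow$) direction rests on the claim that $\stack_n(u_{0:n})=\stack_n(v_{0:m})$ forces $u$ and $v$ to have the same ordered sequence of alternating extrema, so that they differ only by a monotone reparametrization and non-extremal points. Its ($\Leftarrow$) direction is a one-line assertion, and the wiping-out property is never actually invoked.

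That claim is false, for exactly the reason you isolate. A pop discards extrema, so equal stacks do not give equal turning-point sequences. A reversal counter is therefore reparametrization-invariant without factoring through $\stack_n$. Symmetrically, under the literal subsequence reading of Definition~\ref{def:ri} the ($\Leftarrow$) direction fails, since only constants survive.

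Your route replaces the false lemma with two ingredients: a computable section $c$ satisfying $\rho(c(\stack))=\stack$, and the lemma $u\sim c(\rho(u))$, proved by matching each pop of the stack algorithm to a move-(c) deletion. The ($\Leftarrow$) direction then becomes a real check that $\rho$ is invariant under each move. This gives a correct proof and an explicit computable $f=F\circ c$; the paper's computability sentence tacitly needs such a section, or a search for a preimage. The price, which the paper does not acknowledge, is that the wiping-out rule must be built into the definition of $\cR$. The theorem then characterizes invariance modulo Preisach wiping-out, not rate-independence in the standard continuous-time sense.

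Two repairs are needed. First, your illustrating example is wrong. In $(0,1,\tfrac12,\tfrac34)$ the last rise stays below the earlier maximum $1$, so nothing is wiped out. That stack still contains the minimum $\tfrac12$, which $(0,1,\tfrac34)$ never records. Use instead $(0,\tfrac12,\tfrac14,1)$ versus $(0,1)$. There the rise to $1$ passes $\tfrac12$ and wipes out the pair $(\tfrac12,\tfrac14)$, so the stacks coincide, while the reversal counts are $2$ and $0$.

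Second, when you write out the induction, fix the conventions explicitly. $\stack_n$ must record the current value $u_n$: the map $u\mapsto u_n$ is invariant under all three of your moves, so otherwise the ``only if'' direction fails. Also, the boundary form of move (c) near the start must match how the stack algorithm treats $u_0$ or its initial sentinel.
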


\begin{proof}
\textbf{($\Leftarrow$)} If $F[u](n) = f(\stack_n)$, then two
sequences with the same stack have the same output, so $F$ depends
only on the order of extrema and is rate-independent.

\textbf{($\Rightarrow$)} Let $F \in \cR$.
If $\stack_n(u_{0:n}) = \stack_n(v_{0:m})$, then $u_{0:n}$ and
$v_{0:m}$ produce the same ordered sequence of alternating extrema;
any two such sequences are related by a monotone time
reparametrization (up to insertion of non-extremal points, which
does not change the extremum sequence).
By rate-independence (Definition~\ref{def:ri}),
$F[u](n) = F[v](m)$.
Hence $F$ is constant on the fibers of $\stack_n(\cdot)$ and factors
through it: $F[u](n) = f(\stack_n(u_{0:n}))$ for a well-defined $f$.
Since $F$ is computable and $\stack_n$ is computable from $u_{0:n}$,
the function $f$ is also computable.

\begin{remark}
For Preisach functionals $F = \mathcal{P}_\mu$, the wiping-out
property \cite{Mayergoyz1991} provides an explicit formula
$F_\mu(\stack_n)$---but the factorization holds for all
$F \in \cR$, not only Preisach functionals.
\end{remark}
\end{proof}

\begin{corollary}[Sufficiency of the Stack Process]
\label{cor:suff_process}
The stack process $(\stack_t)_{t=0}^n$ carries strictly more
information than the final stack $\stack_n$: for $t < n$, $\stack_n$
does not in general determine $\stack_t$.
Theorem~\ref{thm:charact} applies to each time $t$ separately:
$F[u](t) = f_t(\stack_t)$.
For the full trajectory of queries $(F[u](0),\ldots,F[u](n))$,
the sufficient object is the stack \emph{process}, not the final
stack alone.
\end{corollary}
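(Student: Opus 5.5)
The corollary has three claims, and I will prove them in order.

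\textbf{Per-time factorization.} Fix $t \le n$. The causal query $u_{0:n} \mapsto F[u](t)$ depends only on the prefix $u_{0:t}$, since $F$ is a functional on $\GL^*$ evaluated on that prefix. Applying Theorem~\ref{thm:charact} to prefixes of length $t+1$ gives a computable $f_t$ with $F[u](t) = f_t(\stack_t(u_{0:t}))$. This needs no new argument, only the observation that the theorem is stated for all finite sequences.

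\textbf{Strictness.} I need one explicit pair of inputs with the same final stack and different intermediate stacks; I will take $n=2$. Let $u = (a, b, c)$ with $b$ a local extremum that is later wiped out, and let $v = (a, c, c)$. For example, choose $a < b < c$ with the excursion to $b$ subsumed by the larger rise to $c$. Then:
\begin{itemize}
\item $\stack_1(u)$ records $b$ while $\stack_1(v)$ records $c$, so $\stack_1(u) \ne \stack_1(v)$;
\item by the wiping-out rule, $\stack_2(u) = \stack_2(v)$.
\end{itemize}
Hence the map $(\stack_t)_{t\le n} \mapsto \stack_n$ is not injective on realized trajectories. It follows that $\stack_n$ is a deterministic function of the process that does not in general determine $\stack_t$.

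If "carries strictly more information" is read Shannon-theoretically, I also need a measure. Take the uniform measure on $\{u, v\}$. Then $\MI(u_{0:n}; (\stack_t)_t) = 1$ bit while $\MI(u_{0:n}; \stack_n) = 0$. The data-processing inequality gives $\ge$ in general, and this example gives strict inequality.

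\textbf{Sufficiency of the process for the trajectory.} Stack the $f_t$ into the vector map $(\stack_0, \dots, \stack_n) \mapsto (f_0(\stack_0), \dots, f_n(\stack_n))$. This shows the process is sufficient for the full query trajectory. To show the final stack alone is \emph{not} sufficient, I will exhibit a query in $\cR$ that separates $u$ and $v$ at time $1$. A suitable choice is $F[u](t) = \mathcal{P}_\mu[u](t)$ with $\mu$ concentrated near a relay whose threshold $\alpha$ lies strictly between $b$ and $c$. Such an $F$ lies in $\cR$ by the ($\Leftarrow$) direction of Theorem~\ref{thm:charact}. It differs on $u$ and $v$ at $t=1$, yet $\stack_2(u) = \stack_2(v)$, so no function of $\stack_n$ recovers the trajectory.

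The main obstacle is making the counterexample rigorous, because Definition~\ref{def:stack} describes the wiping-out rule only informally. I would first fix a concrete convention: the stack entries are running-extremum pairs, and an entry is deleted when a later value exceeds a stored maximum or undercuts a stored minimum. I would then check the two stacks of the example by hand under that convention, and choose $a, b, c \in \GL$ (using $L \ge 2$) so that the separation survives any reasonable variant of the convention.
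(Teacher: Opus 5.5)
Your per-time factorization is the paper's own argument; the paper's only justification of this corollary is the clause ``Theorem~\ref{thm:charact} applies to each time $t$ separately.'' For the strictness and insufficiency claims the paper gives no witness at all. Your pair $u=(a,b,c)$, $v=(a,c,c)$, the uniform two-point measure ($1$ bit versus $0$), and the separating Preisach query are therefore a genuine addition: they turn the paper's assertions into proofs.

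Two details need tightening. First, with $a<b<c$ the sequence $u$ is monotone, so $b$ is not a local extremum. The example works only because the stack records the running extremum (the current value) at time $1$. Under a ``completed turning points only'' convention you would get $\stack_1(u)=\stack_1(v)$. You can avoid checking this by hand. Once $\mathcal{P}_\mu[u](1)\neq\mathcal{P}_\mu[v](1)$, the factorization $\mathcal{P}_\mu[\cdot](1)=F_\mu(\stack_1)$, which you need anyway for membership in $\cR$, forces $\stack_1(u)\neq\stack_1(v)$. So do the separating query first; then only $\stack_2(u)=\stack_2(v)$ depends on the wiping-out convention. Second, specify $\beta$. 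Take the support of $\mu$ inside $\{b<\alpha<c,\ a<\beta\le\alpha\}$, so every relay there is forced off at $u_0=a$. For $\beta\le a$ the initial state depends on an unstated saturation convention, and such relays that start on stay on along both $u$ and $v$, separating nothing.

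Also be aware that you use both directions of Theorem~\ref{thm:charact}, and neither the literal nor the intended reading of Definition~\ref{def:ri} supports both.
\begin{itemize}
\item \emph{Literal reading.} The definition demands invariance under $u\mapsto u\circ\phi$ for strictly increasing $\phi$, i.e.\ under passing to subsequences. Taking $m=0$ gives $F[u]=F[(u_j)]$ for every $j$, hence $F[(x)]=F[(x,y)]=F[(y)]$, so every $F\in\cR$ is constant. Then $\mathcal{P}_\mu\notin\cR$, and the insufficiency half of the corollary is false.
\item \emph{Intended reading} (dependence only on the ordered extrema). Here $\mathcal{P}_\mu\in\cR$ is fine, but ($\Rightarrow$) fails. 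For $L=4$, $(0,\tfrac12,\tfrac14,1)$ and $(0,\tfrac14,\tfrac12,1)$ have the same wiped-out stack, yet their total variations are $\tfrac32$ and $1$. So the per-time identity $F[u](t)=f_t(\stack_t)$ does not hold for all of $\cR$, in your proof or in the paper's.
\end{itemize}

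What you actually need survives. Your third part uses only the genuine Preisach factorization. Sufficiency of the process needs no ($\Rightarrow$) at all: under your convention $\stack_t$ records $u_t$, so $(\stack_t)_{t\le n}$ determines $u_{0:n}$ and hence every causal query.
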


\begin{remark}[Final Stack vs.\ Stack Process]
\label{rem:final_vs_process}
$\stack_n$ is sufficient for the single query $F[u](n)$ at the
\emph{final} time (Theorem~\ref{thm:charact}).
It is not sufficient for the full trajectory
$(F[u](0),\ldots,F[u](n))$.
All information-theoretic results below refer to the
\emph{final-time} query $F[u](n)$.
\end{remark}

\section{Mutual Information Equality}
\label{sec:mi}

Let $u_{0:n}$ be a random sequence drawn from a probability measure
$P$ on $\GL^{n+1}$.
\IEEEpubidadjcol

\begin{theorem}[MI Equality]
\label{thm:mi}
For every $F \in \cR$:
\begin{equation}
  \MI(u_{0:n};\, F[u](n)) = \MI(\stack_n;\, F[u](n)).
  \label{eq:mi_eq}
\end{equation}
\end{theorem}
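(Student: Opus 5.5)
The plan is to derive \eqref{eq:mi_eq} from Theorem~\ref{thm:charact} and two standard information-theoretic facts: the data-processing inequality, and the fact that a deterministic function of a variable carries no information beyond that variable. All random variables here take values in finite sets. Indeed, $u_{0:n}\in\GL^{n+1}$, so $\stack_n$ ranges over a finite subset of $(\GL\times\GL)^*$, and $F[u](n)$ takes finitely many real values. Hence all mutual informations are finite discrete quantities, and no measure-theoretic care is needed.

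\textbf{Step 1.} Set $Y = F[u](n)$. Theorem~\ref{thm:charact} gives a computable $f$ with $Y = f(\stack_n)$. Separately, Definition~\ref{def:stack} shows that $\stack_n = \stack_n(u_{0:n})$ is itself a deterministic function of $u_{0:n}$. Together these give the Markov chain
\begin{equation*}
  Y \;\to\; \stack_n \;\to\; u_{0:n}, \qquad u_{0:n} \;\to\; \stack_n \;\to\; Y,
\end{equation*}
which holds in both directions.

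\textbf{Step 2.} Apply the chain rule to $\MI(u_{0:n},\stack_n;\,Y)$ and expand it in the two possible orders:
\begin{equation*}
  \MI(u_{0:n};Y) + \MI(\stack_n;Y\mid u_{0:n}) = \MI(\stack_n;Y) + \MI(u_{0:n};Y\mid \stack_n).
\end{equation*}
On the left, $\MI(\stack_n;Y\mid u_{0:n})=0$ because $\stack_n$ is determined by $u_{0:n}$, so $\En(\stack_n\mid u_{0:n})=0$. On the right, $\MI(u_{0:n};Y\mid \stack_n)=0$ because $Y=f(\stack_n)$ is determined by $\stack_n$, so $\En(Y\mid\stack_n)=0$. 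Cancelling these two zero terms yields \eqref{eq:mi_eq}.

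Equivalently, one can argue by entropies. Since $Y$ is a function of $\stack_n$, and hence of $u_{0:n}$, both sides equal $\En(Y)$:
\begin{equation*}
  \MI(u_{0:n};Y)=\En(Y)-\En(Y\mid u_{0:n})=\En(Y)=\En(Y)-\En(Y\mid \stack_n)=\MI(\stack_n;Y).
\end{equation*}

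There is no real obstacle once Theorem~\ref{thm:charact} is available; all the content is in that factorization. The one point to check is that the factorization is exact pointwise on every input sequence, and in particular on the support of $P$, rather than merely almost surely. Theorem~\ref{thm:charact} asserts it for all $u_{0:n}\in\GL^*$, so the identity holds for every probability measure $P$ on $\GL^{n+1}$, as the statement requires. Computability of $f$ plays no role here; only measurability, which is trivial on finite sets, is used.
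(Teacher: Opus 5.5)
Your proof is correct and is the paper's argument: the paper proceeds exactly as in your ``equivalently'' paragraph, using Theorem~\ref{thm:charact} to get $\En(F[u](n)\mid u_{0:n}) = \En(F[u](n)\mid \stack_n) = 0$, so that both mutual informations equal $\En(F[u](n))$. Your chain-rule expansion of $\MI(u_{0:n},\stack_n;\,Y)$ uses the same two facts, namely that $\stack_n$ is a function of $u_{0:n}$ and that $Y$ is a function of $\stack_n$, so it is a restatement rather than a different route.
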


\begin{proof}
By Theorem~\ref{thm:charact}, $F[u](n) = f(\stack_n)$ for a
computable $f$, and $\stack_n = \psi(u_{0:n})$ for a computable
$\psi$.
Hence both $F[u](n)$ and $\stack_n$ are deterministic functions of
$u_{0:n}$, giving:
\begin{equation*}
  \En(F[u](n) \mid u_{0:n}) = 0
  \quad\text{and}\quad
  \En(F[u](n) \mid \stack_n) = 0.
\end{equation*}
Therefore:
\begin{align*}
  \MI(u_{0:n};\,F[u](n))
    &= \En(F[u](n)) - \En(F[u](n)\mid u_{0:n})
     = \En(F[u](n)),\\
  \MI(\stack_n;\,F[u](n))
    &= \En(F[u](n)) - \En(F[u](n)\mid \stack_n)
     = \En(F[u](n)).
\end{align*}
Equality~\eqref{eq:mi_eq} follows.
\end{proof}

\begin{remark}[Significance]
\label{rem:trivial}
The significance of \eqref{eq:mi_eq} lies not in proof difficulty
but in what it says: the stack, despite discarding the timing of all
extrema and the values of all non-extremal points, loses \emph{zero}
mutual information about any rate-independent query at the final time.
The non-trivial content is the characterization theorem itself and,
below, the \emph{minimality} result.
\end{remark}

\section{Shannon Minimality}
\label{sec:minimal}

\begin{definition}[$\cR$-Sufficient Statistic]
\label{def:suff_R}
A random variable $S = S(u_{0:n})$ is \emph{$\cR$-sufficient} if
every $F \in \cR$ satisfies $\En(F[u](n)\mid S) = 0$, i.e.,
$F[u](n)$ is a deterministic function of $S$.
\end{definition}

\begin{remark}
The stack $\stack_n$ is $\cR$-sufficient by
Theorem~\ref{thm:charact}.
\end{remark}

\begin{theorem}[Shannon Minimality of $\stack_n$]
\label{thm:minimal}
Among all $\cR$-sufficient statistics:
\begin{equation}
  \MI(u_{0:n};\,\stack_n)
  \;\leq\;
  \MI(u_{0:n};\,S)
  \quad\text{for every $\cR$-sufficient }S.
\end{equation}
\end{theorem}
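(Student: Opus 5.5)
The plan is to show that $\stack_n$ is a deterministic function of every $\cR$-sufficient $S$. The inequality then follows from the data-processing inequality. Since $S = S(u_{0:n})$ and $\stack_n = \psi(u_{0:n})$ are both functions of $u_{0:n}$, we have $\MI(u_{0:n};S) = \En(S)$ and $\MI(u_{0:n};\stack_n) = \En(\stack_n)$. So it suffices to show $\En(\stack_n) \le \En(S)$, which holds as soon as $\stack_n = g(S)$ for some function $g$, because a function cannot increase entropy. The whole proof therefore reduces to a reconstruction step: from the values $\{F[u](n) : F \in \cR\}$, all of which are determined by $S$, recover $\stack_n$.

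For the reconstruction I would use the finite indicator family of \cite{Frydrych2026IPL}. For each candidate stack $\sigma \in (\GL\times\GL)^*$ of length at most $\lfloor n/2\rfloor+1$, define $F_\sigma[u](n) = \mathbf{1}\{\stack_n(u_{0:n}) = \sigma\}$. By the ($\Leftarrow$) direction of Theorem~\ref{thm:charact}, $F_\sigma$ factors through the stack with the computable $f_\sigma(\tau) = \mathbf{1}\{\tau=\sigma\}$, so $F_\sigma \in \cR$. Because $\GL$ is finite and the length is bounded, only finitely many $\sigma$ are relevant. By $\cR$-sufficiency, $\En(F_\sigma[u](n)\mid S)=0$ for each such $\sigma$, so there are functions $h_\sigma$ with $F_\sigma[u](n) = h_\sigma(S)$ almost surely. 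Taking the finite intersection of these almost-sure events, for $P$-almost every realization exactly one $\sigma$ satisfies $h_\sigma(S)=1$, and that $\sigma$ equals $\stack_n$. Setting $g(s) = $ the unique $\sigma$ with $h_\sigma(s)=1$ (and an arbitrary value elsewhere) gives $\stack_n = g(S)$ almost surely, hence $\En(\stack_n\mid S)=0$.

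To finish, I would write $\En(\stack_n) = \MI(S;\stack_n) + \En(\stack_n\mid S) = \MI(S;\stack_n) \le \En(S)$. Combined with the identities from the first paragraph, this yields $\MI(u_{0:n};\stack_n) \le \MI(u_{0:n};S)$. Equivalently, the Markov chain $u_{0:n}\to S\to\stack_n$ holds, and the data-processing inequality applies directly. This route also covers a randomized $S$ correlated with $u_{0:n}$, provided $\stack_n$ is a function of $S$.

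The main obstacle is the reconstruction step, and specifically making sure the indicator queries really lie in $\cR$ and that finitely many of them suffice. Membership in $\cR$ is exactly the ($\Leftarrow$) half of Theorem~\ref{thm:charact}. Finiteness comes from the finite grid $\GL$ and the bound $|\stack_n|\le\lfloor n/2\rfloor+1$ in Definition~\ref{def:stack}. I would also check that the almost-sure qualifiers combine correctly, which is where finiteness of the family matters, since countably many null sets would also be harmless here. A cleaner alternative, if one prefers to avoid indicators, is a single query: the injective encoding $F^\ast[u](n) = \mathrm{code}(\stack_n)$ into $\Rr$. This is in $\cR$ by the same theorem, and its recoverability from $S$ gives $g$ immediately.
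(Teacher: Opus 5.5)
Your proposal is correct and follows the same route as the paper. It exhibits finitely many $\cR$-queries that jointly determine $\stack_n$, deduces $\stack_n=\Phi(S)$ from $\cR$-sufficiency, and concludes with the data-processing inequality. The difference is small and, if anything, slightly cleaner: you use whole-stack indicators $\mathbf{1}\{\stack_n=\sigma\}$ (or a single injective code of $\stack_n$), whose membership in $\cR$ follows directly from the $(\Leftarrow)$ half of Theorem~\ref{thm:charact}, whereas the paper uses the pair-membership indicators $\mathbf{1}[(M,m)\in\stack_n]$ from the cited lemma and must additionally assert that these jointly determine the ordered stack.
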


\begin{proof}
Let $S$ be $\cR$-sufficient.
The finite indicator family $\{F_{(M,m)}\}_{(M,m)\in\GL^2}$ defined
by $F_{(M,m)}[u](n) = \mathbf{1}[(M,m)\in\stack_n]$ lies in $\cR$
(\cite{Frydrych2026IPL}, Lemma~3.2).
Since $S$ is $\cR$-sufficient, each $F_{(M,m)}[u](n)$ is a function
of $S$.
As the finite collection $\{F_{(M,m)}\}$ jointly determines
$\stack_n$, there exists a computable $\Phi$ such that
$\stack_n = \Phi(S)$.
By the data-processing inequality
\cite[Th.~2.8.1]{CoverThomas2006}:
\begin{equation*}
  \MI(u_{0:n};\,\stack_n)
  = \MI(u_{0:n};\,\Phi(S))
  \leq \MI(u_{0:n};\,S). \qedhere
\end{equation*}
\end{proof}

\begin{corollary}[Uniqueness up to $\cR$-Equivalence]
\label{cor:unique}
If $S$ is $\cR$-sufficient and
$\MI(u_{0:n};S) = \MI(u_{0:n};\stack_n)$,
then $S$ and $\stack_n$ are functions of each other
(i.e., they generate the same $\sigma$-algebra up to $P$-null sets).
\end{corollary}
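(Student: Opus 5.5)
The plan is to reduce everything to entropies, using that all objects are deterministic functions of the finitely-valued input $u_{0:n}$, and then to reuse the map $\Phi$ from the proof of Theorem~\ref{thm:minimal}.

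\textbf{Step 1 (entropy form).} Since $u_{0:n}$ ranges over the finite set $\GL^{n+1}$, both $S = S(u_{0:n})$ and $\stack_n = \psi(u_{0:n})$ take at most finitely many values with positive $P$-probability. All entropies involved are therefore finite. Because each is a deterministic function of $u_{0:n}$, we have $\En(S\mid u_{0:n}) = \En(\stack_n\mid u_{0:n}) = 0$. Exactly as in the proof of Theorem~\ref{thm:mi}, this gives
\begin{equation*}
\MI(u_{0:n};S)=\En(S),\qquad \MI(u_{0:n};\stack_n)=\En(\stack_n).
\end{equation*}
The hypothesis therefore reads $\En(S)=\En(\stack_n)$.

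\textbf{Step 2 ($\stack_n$ is a function of $S$).} This direction is already in the proof of Theorem~\ref{thm:minimal}. The indicator functionals $F_{(M,m)}$ lie in $\cR$, and $S$ is $\cR$-sufficient. Hence there is a map $\Phi$ with $\stack_n = \Phi(S)$ $P$-a.s.; we only need $\Phi$ to be defined on the support of $S$.

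\textbf{Step 3 ($S$ is a function of $\stack_n$).} Apply the chain rule to the pair $(S,\Phi(S))$:
\begin{equation*}
\En(S) = \En(S,\Phi(S)) = \En(\Phi(S)) + \En(S\mid \Phi(S)) = \En(\stack_n) + \En(S\mid\stack_n).
\end{equation*}
Combining this with $\En(S)=\En(\stack_n)$ from Step 1 yields $\En(S\mid\stack_n)=0$. For finitely-valued variables, zero conditional entropy means that for every value $\pi$ with $P(\stack_n=\pi)>0$, the conditional law of $S$ given $\stack_n=\pi$ is a point mass. Define $\Psi(\pi)$ to be that atom; then $S=\Psi(\stack_n)$ $P$-a.s.

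Together with Step 2, $S$ and $\stack_n$ are mutually determined off a $P$-null set. Consequently $\sigma(S)$ and $\sigma(\stack_n)$ coincide modulo $P$-null sets, which is the claim.

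The only point needing care is Step 3's passage from $\En(S\mid\stack_n)=0$ to an almost-sure functional relation. This is standard for discrete variables, but it forces the ``up to $P$-null sets'' qualifier: $\Psi$ is determined only on the support of $\stack_n$. The same applies to $\Phi$, which the sufficiency argument determines only on the support of $S$. I would also flag that $\Psi$ need not be computable, since it depends on $P$. The corollary asserts only measurable functional dependence, so this does not obstruct the proof.
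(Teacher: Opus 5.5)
Your proposal is correct and takes the same route as the paper. Both arguments reuse the map $\Phi$ with $\stack_n=\Phi(S)$ from the proof of Theorem~\ref{thm:minimal}, and both observe that equality in the data-processing step forces $\Phi$ to be injective on the support of $S$, which yields $S=\Psi(\stack_n)$ a.s. Your chain-rule computation $\En(S)=\En(\stack_n)+\En(S\mid\stack_n)$ supplies the justification that the paper's one-line claim ``equality holds iff $\Phi$ is injective $P$-a.s.'' leaves implicit. It also makes visible that this step relies on $S$ being a deterministic function of $u_{0:n}$: for a randomized $S$ such as $(\stack_n,N)$ with independent noise $N$, equality would not force injectivity.
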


\begin{proof}
By Theorem~\ref{thm:minimal}, $\stack_n = \Phi(S)$ and
$\MI(u;\stack_n) \leq \MI(u;S)$.
Equality holds iff $\Phi$ is injective $P$-a.s., which combined
with $S$ being $\cR$-sufficient implies $S = \Psi(\stack_n)$
for some $\Psi$ a.s.
\end{proof}

\section{Estimation Implication}
\label{sec:estimation}

The classical approach to estimating $\mu$ from observations
$Y_{0:n}$ is non-negative least squares (NNLS)
\cite{Mayergoyz1991}:
\begin{equation}
  \hat\mu = \arg\min_{\mu\geq 0}
  \sum_{t=0}^n (Y_t - \mathcal{P}_\mu[u](t))^2.
  \label{eq:nnls}
\end{equation}

\begin{proposition}[Estimation via the Stack Process]
\label{prop:estimation}
The NNLS system~\eqref{eq:nnls} can be assembled from the stack
process $(\stack_t)_{t=0}^n$ without retaining $u_{0:n}$:
\begin{enumerate}
  \item At each time $t$,
    $\mathcal{P}_\mu[u](t) = F_\mu(\stack_t)$ by
    Theorem~\ref{thm:charact}, so each residual
    $(Y_t - F_\mu(\stack_t))$ is computable from $\stack_t$ and
    $\mu$.
  \item The stack $\stack_t$ is maintained online in $O(k_t)$ space
    and $O(1)$ amortized time per step \cite{Frydrych2026WC},
    where $k_t = |\stack_t|$.
  \item The relay state lookup $F_\mu(\stack_t)$ costs $O(L^2)$ per
    step. The NNLS Gram matrix has size $O(L^4)$, independent of
    $n$. The saving is in \emph{input history storage}: the full
    history $u_{0:n}$ requires $O(n)$ space, whereas the stack
    process requires $O(\max_t k_t)$---a reduction by
    $n/\max_t k_t$, which can be $\Omega(n)$ for slowly varying
    signals \cite{Frydrych2026WC}.
\end{enumerate}
\end{proposition}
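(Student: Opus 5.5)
I will establish the three items of Proposition~\ref{prop:estimation} in order, treating the assembly claim (item~1) as the main content and the complexity claims (items~2 and~3) as bookkeeping.

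\textbf{Item 1: the model output is a function of the current stack.} Fix $\mu$ and $t\le n$. I will argue that $\mathcal{P}_\mu[u](t)$ depends on $u_{0:t}$ only through $\stack_t$. I will not invoke Theorem~\ref{thm:charact} abstractly, since that would require checking that $\mathcal{P}_\mu$ lies in $\cR$. Instead I will use the wiping-out property of \cite{Mayergoyz1991} directly.

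For each relay $(\alpha,\beta)$, its state at time $t$ is set by the most recent time $u$ crossed $\alpha$ upward or $\beta$ downward. That event is recorded by the surviving extrema, because any extremum that was wiped out is dominated by a later one. So $\hat\gamma_{\alpha,\beta}[u](t)$ is determined by the staircase boundary built from $\stack_t=((M_1,m_1),\ldots,(M_k,m_k))$ together with the current value $u_t$. I will need $u_t$ to be recoverable from the stack, or else carry it alongside the stack. I will check the convention in Definition~\ref{def:stack} and, if necessary, append $u_t$ as the top entry.

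Integrating over $\cT$ then gives
\[
\mathcal{P}_\mu[u](t)=F_\mu(\stack_t)=\sum_{(\alpha,\beta)\in\cT\cap\GL^2}\chi_{\stack_t}(\alpha,\beta)\,\mu_{\alpha\beta},
\]
where the indicator $\chi_{\stack_t}$ marks the region of the threshold triangle in state $1$, and $\mu$ is discretized on the grid. This expression is \emph{linear} in $\mu$. Hence the NNLS objective~\eqref{eq:nnls} equals $\|Y-A\mu\|^2$, where row $t$ of $A$ is $\chi_{\stack_t}$. The normal equations $A^\top A$ and $A^\top Y$ can therefore be accumulated as running sums of $\chi_{\stack_t}\chi_{\stack_t}^\top$ and $Y_t\chi_{\stack_t}$. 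At no point does this require $u_{0:n}$.

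\textbf{Items 2 and 3: cost accounting.} For item~2, I will cite the online push/pop maintenance of \cite{Frydrych2026WC}. Each new sample either pushes one entry or pops the entries it wipes out. Since every entry is pushed at most once, the time is amortized $O(1)$ per step, and the space is $O(k_t)$.

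For item~3, forming $\chi_{\stack_t}$ requires filling the $O(L^2)$ grid triangle. This can be done by sweeping the $k_t$ staircase corners, for a cost of $O(L^2)$ per step. The accumulated Gram matrix has $O(L^2)\times O(L^2)=O(L^4)$ entries, which is independent of $n$.

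The history-storage comparison is direct: storing $u_{0:n}$ takes $O(n)$ space versus $O(\max_t k_t)$ for the stack. The ratio can be $\Omega(n)$ for slowly varying signals, where $k_t$ stays bounded.

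\textbf{Main obstacle.} The delicate step is item~1. The precise staircase-to-relay-state correspondence must hold on the discrete grid, including the boundary conventions $u=\alpha$ and $u=\beta$ and the handling of the current input value. I would first verify it by induction on $t$: I will show that a single push or a single wipe-out updates $\chi$ exactly as the relay switching rule of Definition~\ref{def:preisach} does.
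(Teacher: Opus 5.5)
Your proposal is correct, and it departs from the paper at the key step of item~1. The paper gets item~1 in one line by applying Theorem~\ref{thm:charact} to $F=\mathcal{P}_\mu$ and treats the rest as bookkeeping. You instead derive $\mathcal{P}_\mu[u](t)=F_\mu(\stack_t)$ from the wiping-out property, which the paper only alludes to in the remark after Theorem~\ref{thm:charact}.

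The paper's route is shorter, and if valid it would cover any model in $\cR$. But it rests on two unchecked points:
\begin{itemize}
\item that $\mathcal{P}_\mu\in\cR$; read literally, Definition~\ref{def:ri} demands invariance under passing to subsequences, which the Preisach output lacks;
\item the ($\Rightarrow$) step of Theorem~\ref{thm:charact}, which equates ``same stack'' with ``same sequence of alternating extrema'' even though wiped-out extrema are discarded. A reversal counter is rate-independent in the usual sense but is not a function of $\stack_n$.
\end{itemize}
Your model-specific argument avoids both problems.

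Your second addition is to write $F_\mu(\stack_t)$ as linear in $\mu$ and accumulate $A^\top A$ and $A^\top Y$ as running sums. This is what turns ``assembled without retaining $u_{0:n}$'' and ``Gram matrix independent of $n$'' into actual consequences. The paper states the $O(L^4)$ size without saying how the $n$ equations are folded into it.

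Two details for the write-up:
\begin{itemize}
\item The base case of your induction needs a fixed initial relay configuration (e.g.\ negative saturation), which determines how $u_0$ enters the stack.
\item The boundary conventions you flag as the main obstacle are harmless. For grid-valued input, all relays in an open grid cell (square or diagonal half-cell) switch together, and thresholds on grid lines form a Lebesgue-null set. Your finite sum is therefore exact once $\mu_{\alpha\beta}$ is read as $\int_c\mu$ over the corresponding cell $c$.
\end{itemize}
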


\begin{remark}
The space reduction comes from \emph{online stack maintenance},
not from compressing the trajectory to the final stack $\stack_n$.
The number of NNLS equations remains $n$; what is reduced is the
memory required to evaluate each equation.
\end{remark}

\section{Discussion and Conclusion}
\label{sec:conclusion}

We proved that the Preisach extremum stack $\stack_n$ is a
Shannon-minimal sufficient statistic for the class $\cR$ of
computable rate-independent functionals at the \emph{final time} $n$.
The proof rests on three steps:
(i)~the characterization theorem (Theorem~\ref{thm:charact}), which
uses the wiping-out property to show that every $\cR$-query is a
function of the stack (the non-trivial step);
(ii)~the MI equality (Theorem~\ref{thm:mi}), which is an immediate
corollary;
(iii)~the minimality theorem (Theorem~\ref{thm:minimal}), which uses
the indicator family and the data-processing inequality.

Two distinctions are essential.
The \emph{final} stack $\stack_n$ is sufficient for the final-time
query $F[u](n)$; the \emph{stack process} $(\stack_t)_{t=0}^n$ is
required for the full trajectory.
Accordingly, the estimation gain (Proposition~\ref{prop:estimation})
is a space reduction through online stack maintenance, not a
reduction in the number of observations.

This result complements the Kolmogorov-complexity characterization of
\cite{Frydrych2026IPL}: Kolmogorov minimality holds worst-case for
every individual sequence; Shannon minimality holds in expectation
under any probability measure.
Neither implies the other, but both identify $\stack_n$ as the
canonical information-minimal representation for $\cR$.

\subsection*{Open Questions}

Does Theorem~\ref{thm:minimal} extend to the vector Preisach
operator \cite{Brokate1996}?
Can the stack-process sufficiency be given a formal Fisher--Neyman
characterization for $\mu$-estimation with full trajectory data?

\section*{Acknowledgment}

The author thanks the reviewers of \cite{Frydrych2026IPL} for
comments that sharpened the statement of the indicator-family lemma
used in the proof of Theorem~\ref{thm:minimal}.

{\appendix[Note on the Use of Generative AI]
AI-assisted tools were used for grammar checking and \LaTeX\
formatting. All mathematical content is the sole responsibility
of the author.}

\bibliographystyle{IEEEtran}
\bibliography{references_shannon}

\begin{IEEEbiographynophoto}{Piotr Frydrych}
received the M.Sc.\ degree in mechatronics from Warsaw University of
Technology, Warsaw, Poland, where he is currently pursuing the Ph.D.\
degree at the Metrology and Biomedical Engineering Institute,
Faculty of Mechatronics. His research interests include hysteresis
operators, information theory, and embedded vision systems.
\end{IEEEbiographynophoto}

\end{document}